




\documentclass[sigconf,screen]{aamas}

\usepackage{amsmath,amsfonts}
\newtheorem{remark}{Remark}
\usepackage{algorithm}
\usepackage{array}
\usepackage[caption=false,font=normalsize,labelfont=sf,textfont=sf]{subfig}
\usepackage{textcomp}
\usepackage{stfloats}
\usepackage{url}
\usepackage{verbatim}
\usepackage{graphicx}
\usepackage{booktabs}
\usepackage{color,soul}

\usepackage{algpseudocode}
\usepackage{caption} 
\captionsetup[table]{skip=2pt} 
\usepackage{makecell}

\newtheorem{lemma}{Lemma}            
\newtheorem{proposition}{Proposition}
\newtheorem{corollary}{Corollary}



\usepackage{balance} 
\usepackage{xcolor}
\usepackage{hyperref}
\hypersetup{colorlinks=true, allcolors=green}


\setcopyright{ifaamas}
\acmConference[AAMAS '26]{Proc.\@ of the 25th International Conference
on Autonomous Agents and Multiagent Systems (AAMAS 2026)}{May 25 -- 29, 2026}
{Paphos, Cyprus}{C.~Amato, L.~Dennis, V.~Mascardi, J.~Thangarajah (eds.)}
\copyrightyear{2026}
\acmYear{2026}
\acmDOI{}
\acmPrice{}
\acmISBN{}





\title[AAMAS-2026 Formatting Instructions]{Peer-Aware Cost Estimation in Nonlinear General-Sum Dynamic Games for Mutual Learning and Intent Inference}


\author{Seyed Yousef Soltanian}
\affiliation{
  \institution{Arizona State University}
  \city{Tempe}
  \country{United States}}
\email{ssoltan2@asu.edu}

\author{Wenlong Zhang}
\affiliation{
  \institution{Arizona State University}
  \city{Tempe}
  \country{United States}}
\email{wenlong.Zhang@asu.edu}


\begin{abstract}
Dynamic game theory is a powerful tool in modeling multi-agent interactions and human-robot systems. In practice, since the objective functions of both agents may not be explicitly known to each other, these interactions can be modeled as incomplete-information general-sum dynamic games. Solving for equilibrium policies for such games presents a major challenge, especially if the games involve nonlinear underlying dynamics.
To simplify the problem, existing work often assumes that one agent is an expert with complete information about its peer, which can lead to biased estimates and failures in coordination. To address this challenge, we propose a nonlinear peer-aware cost estimation (N-PACE) algorithm for general-sum dynamic games. In N-PACE, using iterative linear quadratic (ILQ) approximation of dynamic games, each agent explicitly models the learning dynamics of its peer agent while inferring their objective functions and updating its own control policy accordingly in real time, which leads to unbiased and fast learning of the unknown objective function of the peer agent. Additionally, we demonstrate how N-PACE enables intent communication by explicitly modeling the peer's learning dynamics. Finally, we show how N-PACE outperforms baseline methods that disregard the learning behavior of the other agent, both analytically and using our case studies.\footnotemark Codes: \href{https://github.com/YousefSoltanian/AAMAS2026-NPACE}{AAMAS2026-NPACE}.

\end{abstract}


\keywords{Incomplete Information Dynamic Games, Optimal Control, Intent Inference, Learning from a Learner, ILQgames}


         
\newcommand{\BibTeX}{\rm B\kern-.05em{\sc i\kern-.025em b}\kern-.08em\TeX}


\begin{document}


\pagestyle{fancy}
\fancyhead{}


\maketitle 


\section{Introduction}

General-sum dynamic games offer a powerful framework for modeling multi-agent interactions \cite{schwarting2019social, losey2018review}, with applications in autonomous driving \cite{chen2021shall}, assistive robots \cite{li2019differential}, and social navigation \cite{fridovich2020efficient}. However, the process of finding the feedback Nash equilibrium (FBNE) solution of such games, which can be achieved by solving a set of Hamilton-Jacobi-Bellman (HJB) equations \cite{bressan2010noncooperative}, is inherently complex, as traditional dynamic programming approaches suffer from the \textit{curse of dimensionality} \cite{powell2007approximate}. 
This problem becomes even more challenging when agents do not know each other's objective function and must perform inference online while making decisions, resulting in an incomplete-information differential game, a common scenario in human-robot interaction (HRI) \cite{li2019differential, nikolaidis2017game, kedia2024interact, tian2023towards} and autonomous vehicle interaction \cite{amatya2022shall, wang2019enabling, xing2019driver}.

\begin{figure}[t]
\centering
\includegraphics[width=\columnwidth]{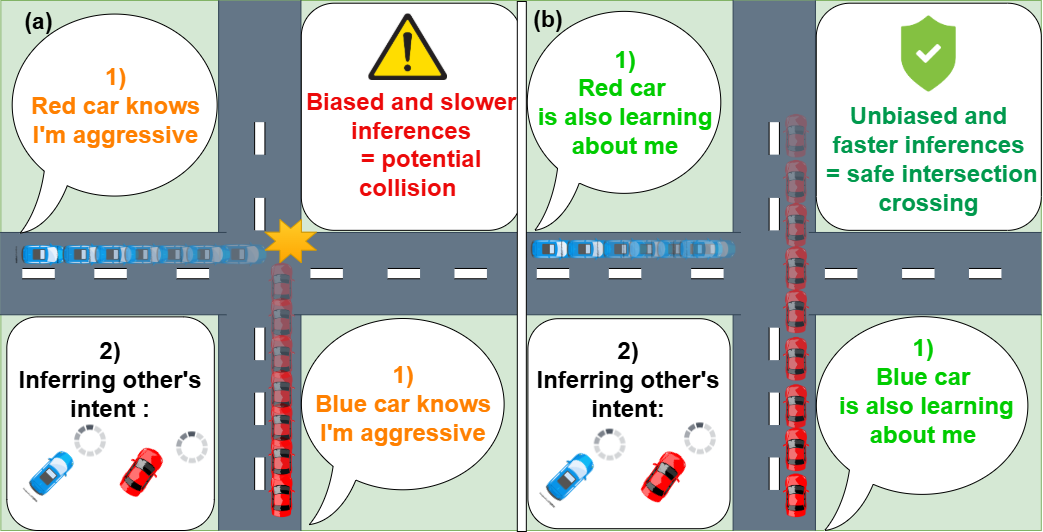}
\caption{Demonstration of the core idea in N-PACE using a simulated
general-sum game between two autonomous cars at an intersection. In
(a), each agent assumes its peer knows its aggressiveness level, leading to biased inferences and possible collisions.
In (b), N-PACE models the peer agent as a learning entity, resulting in a safe intersection crossing.
}
\label{fig1}
\end{figure}

\begingroup
\renewcommand\thefootnote{}\footnotetext{\textbf{Extended preprint version.} This manuscript is an extended version of our paper accepted to \emph{Proc.\@ of AAMAS 2026}.}
\endgroup
\setcounter{footnote}{0}

Existing approaches to solve such games often rely on the assumption that one agent is an \textit{expert} which knows the true intent of other agents \cite{peters2024contingency,le2021lucidgames, li2024intent, schwarting2019social}. Despite success of these approaches in situations where the peer agent is indeed a complete information agent, in scenarios where all agents are learning simultaneously and inferring each other’s objective functions (which is defined as their \textit{intent} in this paper), these assumptions can lead to biased estimates, poor coordination, and failure to achieve each agent's goal \cite{liu2016blame,soltanian2025pace}.
This is because while doing the inference, they fail to account for the fact that other agents can be learners, whose policies are conditioned on previously observed data \cite{kedia2024interact}. It is even more important to consider such mutual learning in a competitive setting where an agent has no information about the others. 

In this paper, we highlight the importance of accounting for the learning dynamics of peer agents while inferring their objectives, addressing critical limitations in existing approaches. We achieve this goal by developing a peer-aware cost estimation algorithm for general-sum nonlinear games (N-PACE), where each agent leverages prior knowledge about its peer agent’s learning dynamics (e.g., gradient-based learner, Bayesian learner) to condition its current inference of the other's objective on their parameter learning, thereby removing the bias in their inferences. The N-PACE algorithm employs iterative linear quadratic (ILQ) approximation of the nonlinear general-sum game \cite{fridovich2020efficient}, both in policy update and intent inference phases. We also demonstrate how (ILQ) enables efficient approximation during the inference phase of our algorithm.

\noindent\textbf{Contributions.} This paper makes the following contributions.  
(1) We introduce the N-PACE algorithm for mutual intent inference in multi-agent interaction with continuous action-state and continuous intent space, where all agents are making online inferences. 
(2) We empirically and theoretically demonstrate the advantages of N-PACE in convergence, safety, and task completion, compared to baseline methods that assume the peer agent is an expert who knows the ego agent's true intent.
(3) We demonstrate how N-PACE, along with access to the learning dynamics of the peer agent, can be leveraged for intent communication and signaling to facilitate mutual learning and adaptation.

The closest works to ours are \cite{fridovich2020efficient,li2024intent,liu2016blame,soltanian2025pace}, where \cite{fridovich2020efficient} uses ILQGame for solving \textit{complete information} multi-agent interactions. In \cite{li2024intent}, the ILQ approach is employed in a general-sum game where an ``expert'' demonstrates its intent to a learner agent knowing their learning dynamics, but it does not consider the situation when two learning agents (no experts) try to coordinate which is a more practical setting in multi-agent interactions. Finally, while \cite{liu2016blame,soltanian2025pace} study the problem of mutual learning and adaptation, the former's problem setup is limited to a single-step optimization horizon cost optimization, and the latter considers only the linear quadratic setup. This work extends those to nonlinear general-sum games, which is a more practical multi-agent setup.


\section{Related Work}

\label{sec: related}
\textbf{Solving general-sum games.} Multi-agent interactions have been explored using various frameworks, including multi-agent reinforcement learning \cite{lowe2017multi, canese2021multi}, adaptive control \cite{chen2019control}, and game theory \cite{yang2020overview, bloembergen2015evolutionary}. A large portion of these interactions can be modeled as general-sum dynamic games \cite{zhang2023approximating, zhang2024pontryagin}.
Traditional approaches for solving such games rely on adaptive dynamic programming to compute feedback Nash equilibria (FBNE) \cite{vamvoudakis2011multi}. However, these methods are often hindered by the \textit{curse of dimensionality} \cite{powell2007approximate}, making them impractical for high-dimensional problems and real-time implementation in robotics. To overcome these limitations, physics-informed neural networks (PINN) have been used to solve the HJB equation more efficiently \cite{zhang2024pontryagin}. Some recent work has focused on iterative linear-quadratic  (ILQGames) approximations, which offer a more computationally efficient strategy for approximating the multi-agent interactions solution by finding the \textit{local Nash equilibrium} of such games \cite{fridovich2020efficient}. In this paper, we employ ILQGame for both the learning and control phases to find the approximated Nash equilibrium policies.

\textbf{Incomplete information general-sum games.} 
 Solving incomplete information general-sum games is significantly more challenging than their complete-information counterparts \cite{aumann1995repeated, cardaliaguet2012games}. These games often require simplifications, such as modeling one agent (the ego agent) as a learner while assuming others are fully informed \cite{peters2024contingency,le2021lucidgames, laine2021multi}, treating the problem as an adaptive control problem \cite{li2019differential}, discretizing the intent space or relying on offline datasets for intent inference and belief updates \cite{hu2023emergent, mehr2023maximum, chen2021shall}.  
Our work is motivated by the lack of studies on incomplete information dynamic games where all agents are learners \cite{jacq2019learning}. We will highlight the potential issues caused by these simplifications, such as modeling other agents as fully informed.

\textbf{Learning from a learner.}
Online inference of an expert agent's objective function has been extensively studied through inverse optimal control and inverse reinforcement learning \cite{molloy2020online, rhinehart2017first, self2022model}. While inverse non-cooperative games have been explored both offline and online \cite{molloy2022inverse, li2023cost}, existing works focus on learning from an expert system.  
Some works have considered \textit {opponent shaping} and \textit{learning from a learner} agent by incorporating their learning dynamics into offline multi-agent policy training approaches \cite{foerster2017learning, jacq2019learning}, but they are inherently offline multi-agent policy training algorithms and not primarily designed for real-time intent inference and objective learning. Moreover, they are not built on game-theoretic setups where the future action of one agent is conditioned on others. But motivated by these \textit{learning from a learner} approaches and recent efforts in modeling human learning dynamics in human-robot interactions \cite{tian2023towards, kedia2024interact}, we propose a framework for mutual intent inference in systems where all agents are learners.

\section{Problem Formulation} 

\textbf{Notations and Assumptions.} We focus on a two-player setting with agents \( i \) and \( j \). When referring to an arbitrary agent, we use the index \( k \) and denote their counterpart as \( -k \). We consider the following system dynamics
\begin{equation}
s_{t+1} = f_t(s_{t}, a^i_t, a^j_t),  
\label{eq:dynamic}
\end{equation}
where \( s_t \in \mathbb{R}^n \) is the shared state vector at time step \( t \), and \( a^i_t, a^j_t \in \mathbb{R}^m \) are the control inputs of agents \( i \) and \( j \), respectively. The function \( f_t \) is assumed to be differentiable with respect to \( s_t, a^i_t, \) and \( a^j_t \). We assume that all states and control signals \( s_t, a^i_t, a^j_t \) are observable to both players at each time step.  
For a fixed time horizon \( T \), the running cost of each player \( k \in \{i,j\} \) is defined as  
$g^k_t(s_t, a^i_t, a^j_t; \theta^k)$,  
where \( g^k_t \) is a scalar function that is twice differentiable with respect to \( s_t, a^i_t, a^j_t \) and differentiable with respect to the \textit{intent} parameter \( \theta^k \) for \( g^k_t \) (e.g., the targeted goal state or a weighting factor).  
Additionally, we use the notation \( \langle \hat{\cdot} \rangle \) to denote the parameter estimate.

\textbf{General-sum game formulation.} We consider a receding horizon cost minimization problem for each agent, where at each time step \( t \in \{0,1,\dots,T\} \), each player \( k \) aims to minimize the cumulative cost:
\begin{equation}
     J^k_t(s_t, a^i_t, a^j_t) = \sum_{\tau=t}^{T} g^k_\tau(s_\tau, a^i_\tau, a^j_\tau; \theta^k). 
     \label{eq:J}
\end{equation}
The Feedback Nash Equilibrium (FBNE) \cite{bacsar1998dynamic} is a solution concept for the above minimization problem, constrained by (\ref{eq:dynamic}), where neither agent has an incentive to deviate from their policy, as doing so would increase their cumulative cost. In an FBNE, each agent’s policy is a function of the current state, ensuring that at every time step, the chosen control actions are optimal given the strategies of the other agents. We represent these policies using $\pi^k_t(s_t; \theta^i, \theta^j) $ such that \( a^k_t = \pi^k_t(s_t; \theta^i, \theta^j) \), emphasizing that they depend on both \( \theta^i \) and \( \theta^j \), which is crucial for further formulation in our work.

\textbf{The incomplete information game.}
The previously discussed FBNE solution was derived under the assumption that both agents have full knowledge of \( \theta^i \) and \( \theta^j \). However, in incomplete information games, \( \theta^j \) may be unknown to agent \( i \), and \( \theta^i \) may be unknown to agent \( j \). As a result, at each time step, they must rely on estimations \( \hat{\theta}^j_t \) and \( \hat{\theta}^i_t \), respectively, and update these estimates over time. This leads to policies of the form \( \pi^i_t(s_t; \theta^i, \hat{\theta}^j_t) \) and \( \pi^j_t(s_t; \hat{\theta}^i_t, \theta^j) \), which are not necessarily the same as FBNE strategies. To address this problem, the next section introduces how we use the ILQ method to obtain the policy \( \pi^k_t(s_t; \theta^k, \hat{\theta}^{-k}_t) \) for each agent \( k \), based on their current estimate of the other agent's parameter \( \hat{\theta}^{-k}_t \). We then propose appropriate update rules for these estimates in Sec. \ref{sec:learning}.

\subsection{ILQGame for Multi-Agent Interactions}
\label{sec: ILQR}
The iterative linear quadratic regulator (iLQR) approach for locally optimal feedback control of nonlinear systems was first introduced in \cite{li2004iterative}. More recent work, such as \cite{fridovich2020efficient}, has demonstrated its efficiency in computing local Nash equilibrium solutions for general-sum dynamic games in real time. In our incomplete information game, where we need policies \( \pi^k_t \) to be computed as functions of the estimated parameters \( \hat{\theta}^i_t \) and \( \hat{\theta}^j_t \) in real time, we leverage the proposed ILQGame algorithm in \cite{fridovich2020efficient} to solve the game for each agent \( k \) at every time step. The main reasons for using ILQGame in this work are its efficiency in real-time computations and, more importantly, its capability to parameterize the obtained policies as functions of \( \hat{\theta}^i_t \) and \( \hat{\theta}^j_t \), which is crucial for having a differentiable layer in N-PACE as shown in Sections \ref{sec:learning} and \ref{sec: N-PACE}.

In the ILQGame approach, we begin with an initial admissible policy for each agent. In the forward pass, by applying these policies to the dynamics (\ref{eq:dynamic}), we obtain a trajectory \( \boldsymbol{\zeta^0_t} = \{(s^0_{\tau}, a^{i0}_{\tau}, a^{j0}_{\tau}) \mid \tau \in \{t, t+1, \dots, T\} \} \). Next, by linearizing the dynamics \( f_t \) and quadratizing the running costs \( g^k_t \) in the neighborhood of the trajectory \( \boldsymbol{\zeta^0_t} \), we construct a finite-horizon LQ game, which can be solved via coupled Riccati equations using the methods proposed in \cite{bacsar1998dynamic}. The linear feedback control solution of this LQ game is then used to update the agents' policies, as suggested in \cite{fridovich2020efficient}. This iterative process continues until the system trajectory remains unchanged at each iteration and converges to a fixed trajectory, known to be the local Nash equilibrium trajectory \( \boldsymbol{\zeta^*_t} \).

In an incomplete information game, each agent learns  the other's parameter \( \theta^{-k} \), resulting in an estimate \( \hat{\theta}^{-k}_t \). As a result, each agent independently solves a different ILQ problem to update their policy using \( \theta^k \) and \( \hat{\theta}^{-k}_t \), leading to a distinct converged trajectory \( \zeta^{*k}_t(\theta^k, \hat{\theta}^{-k}_t) \) at each step. At the converged trajectories, each agent's policy consists of a feedforward time-varying term $\alpha_t$ and a feedback time varying term $P_t$ in the form of
\begin{equation}
\begin{aligned}
    \pi^k_t(s_t; \theta^k, \hat{\theta}^{-k}_t) &= -\alpha_t(\theta^k, \hat{\theta}^{-k}_t) \\
    &\quad - P_t(\theta^k, \hat{\theta}^{-k}_t)(s_t - s^{*k}_t(\theta^k, \hat{\theta}^{-k}_t)),
\end{aligned}
\label{eq:policy}
\end{equation}
 where \( s^{*k}_t(\theta^k, \hat{\theta}^{-k}_t) \) represents states stored in the converged trajectory. In (\ref{eq:policy}), we explicitly maintain and emphasize the dependence of each parameter on \( \theta^k \) and \( \hat{\theta}^{-k}_t \). This not only highlights how agents update their policies based on new estimates but also serves as an introduction to the N-PACE algorithm.

\begin{remark}{(Extension to Other Methods)}
The core idea in this paper is not limited to using ILQ for solving general-sum games. The key requirement for the success of N-PACE is to have a policy explicitly parameterized by \( \theta^k \) and \( \theta^{-k} \). In fact, alternative approaches, such as some reinforcement learning algorithms \cite{williams1992simple, lowe2017multi}, can also borrow the N-PACE core idea as long as they can generate policies parametrized by intent parameters.
\end{remark}

\subsection{Modeling the Learning Dynamics}
\label{sec:learning}
In Sec. \ref{sec: ILQR}, we discussed how an agent \( k \) can update its policy at each step based on the current estimate \( \hat{\theta}^{-k}_t \) to solve a general-sum game. However, the question of how to learn and update \( \hat{\theta}^{-k}_t \) remains open.  
In this section, we propose a learning algorithm under the assumption that agent \( -k \) is an expert, for the sake of clarity and brevity. We then introduce approximation methods to enhance the efficiency of this learning process for real-time applications.  
Finally, in  Sec. \ref{sec: N-PACE}, we discuss how this learning approach can be extended to N-PACE for scenarios where all agents are learners.

\textbf{The learning dynamics $h^k_l$.} When agents can observe the states and control signals \( s_t, a^k_t, a^{-k}_t \) at each step, they can leverage this information to construct an error term for the online learning of the parameters \( \hat{\theta}^{-k}_t \).  
An agent \( k \), knowing that their peer is solving an ILQGame to generate policies, can attempt to model this policy using their current estimate \( \hat{\theta}^{-k}_t \) as \( \hat{a}^{-k}_t = \pi^{-k}_t(s_t; \hat{\theta}^{-k}_t, \theta^k) \) as given in (\ref{eq:policy}). Consequently, they can define the following update rule as their learning dynamics \( h^k_l \) to refine the estimate of \( \hat{\theta}^{-k}_t \)
\begin{equation}
\label{eq:learning_dy}
\begin{aligned}
    \hat{\theta}^{-k}_{t+1} &= h^k_l(\hat{\theta}^{-k}_t, \theta^k, a^{-k}_t, s_t) 
    = \hat{\theta}^{-k}_t \\&- \alpha_k (\pi^{-k}_t(s_t;\hat{\theta}^{-k}_t,\theta^k) - a^{-k}_t)^\top 
    \nabla_{\hat{\theta}^{-k}_t} \pi^{-k}_t(s_t;\hat{\theta}^{-k}_t,\theta^k)
\end{aligned}
\end{equation}

\begin{remark}{(Expert Peer Assumptions)}
Notice that in  (\ref{eq:learning_dy}), when modeling the policy of the other agent \( -k \), the parameter \( \theta^k \) is included in the policy model. This effectively assumes that the other agent has full knowledge of \( \theta^k \) and is treated as an expert, which can lead to incorrect and biased estimates.
\end{remark}

\begin{remark}{(Flexibility of Learning Dynamics)} The proposed learning dynamics \( h^k_l \) follows a simple gradient descent algorithm with a learning rate \( \alpha_k \), but it is not restricted to gradient-based methods in the N-PACE. It can also take the form of a Bayesian learner or even a neural network model, as explored in \cite{tian2023towards}.  
\end{remark}
\textbf{Efficient approximation of the gradient.} The learning dynamics proposed in (\ref{eq:learning_dy}) require calculating the gradient of the policy model with respect to the unknown parameter \( \hat{\theta}^{-k}_t \). The ILQ method, which generates the policy at each step, is an iterative process where the generated trajectories and policy improvement steps depend on the unknown parameter \( \hat{\theta}^{-k}_t \). While this approach still relates the policy to \( \hat{\theta}^{-k}_t \), the iterative nature of ILQ can make gradient calculation computationally expensive or even intractable for real-time implementation.  
Notably, in (\ref{eq:policy}), as well as in the converged trajectory \( \zeta^{*k}_t(\theta^k, \hat{\theta}^{-k}_t) \), all parameters are explicitly expressed as functions of \( \hat{\theta}^{-k}_t \). 

To address this challenge, we adopt the approximation method proposed in \cite{li2023cost}. Specifically, we ignore the dependency of the converged trajectory in ILQ on \( \hat{\theta}^{-k}_t \) and replace the gradient of the policy with the gradient of the linear policy obtained by solving the LQ game around the converged trajectory of iLQR.  
The intuition behind this approximation is that in ILQ, the policy derived from solving the expanded LQ game around the converged trajectory serves as a good approximation of the final converged policy itself, as discussed and demonstrated in \cite{li2023cost}. However, automatic differentiation (AD) methods can still be used to compute this gradient if there are no computational constraints. Alternatively, if a different game solver allows the policy to be directly expressed as a function of the unknown parameter \( \hat{\theta}^{-k}_t \) (e.g., an actor network receiving the parameter as a state input), one can bypass the need for this approximation method.

\subsection{N-PACE for Multi-Agent Interactions}
\label{sec: N-PACE}
In Sec. \ref{sec:learning}, we proposed the learning dynamics \( h^k_l \) for each agent. However, the update rule in (\ref{eq:learning_dy}) assumes that agent \( -k \) is aware of \( \theta_k \) while generating its policy, effectively treating agent \( -k \) as an expert during estimation. This assumption is clearly invalid when all agents are learning.  
In N-PACE, we address this issue by ensuring that each agent \( k \) not only learns parameters of the other agent, but also accounts for the learning procedure of the other agent \( -k \). This is feasible when agents have access to both learning dynamics, \( h^i_l \) and \( h^j_l \). Consequently, they can replace \( \theta_k \) with \( \hat{\theta}^k_t \) in equation (\ref{eq:learning_dy}), eliminating the bias in the learning process.  
As a result, each agent, by having access to learning dynamics and tracking the learning of the other agent, arrives at the following system of learning equations at each step
\begin{equation}
\label{eq: PACE}
\begin{aligned}
    \hat{\theta}^{i}_{t+1} &= h^i_l(\hat{\theta}^{i}_t, \hat{\theta}^{j}_t, a^{i}_t, s_t), \\
    \hat{\theta}^{j}_{t+1} &= h^j_l(\hat{\theta}^{j}_t, \hat{\theta}^{i}_t, a^{j}_t, s_t).
\end{aligned}
\end{equation}

Details of N-PACE are shown in Algorithm 1. It should be noted that N-PACE does not require both agents to follow the same learning dynamics; what matters in N-PACE is that agents have knowledge of each other's learning dynamics. An important property of N-PACE is that when both agents start with the same initial estimates for \( \hat{\theta}^{i}_{0} \) and \( \hat{\theta}^{j}_{0} \) and continuously track each other's learning, they will always track the same estimations for both parameters. As a result, the learning dynamics (\ref{eq: PACE}) are effectively centralized, even though each agent performs the learning process independently.  
The situation where agents know others’ initial priors/estimates often happens in multi-agent settings when there is a strong prior on the parameters (e.g., a predefined level of rationality or aggressiveness in autonomous driving) or when agents are designed to coordinate despite having different unknown intentions.  
\begin{remark}(Unknown learning dynamics)
    If the initial estimates or learning dynamics of each agent are uncertain (such as human-robot interactions), we refer the readers to works such as \cite{tian2023towards, xie2021learning, wang2022influencing}, which focus on data-driven modeling of these learning processes or latent representation of such learning behaviors.
\end{remark} 
Even if agents misspecify each other’s initial estimates, N-PACE tends to attenuate the resulting bias compared to modeling the peer as a complete information agent that introduces a fixed bias into the learning procedure. Intuitively, for example, when the peer’s update map admits a stable fixed point, the coupled updates steer the estimates toward that fixed point, reducing the initial mismatch over time, trying to keep N-PACE robust to the initial mismatch. In contrast, methods that approximate the peer as a fully informed “expert” introduce a \emph{structural} bias: the model mismatch is baked into the inference mechanism and therefore persists throughout the interaction \cite{liu2016blame,soltanian2025pace}. A robustness analysis of N-PACE to misspecified learning dynamics and initial estimates has been provided in Sec ~\ref {robust}.
While N-PACE’s advantage in eliminating the inference bias of baseline methods is intuitive, the problem it addresses is generally non-convex; thus, any convergence guarantee necessarily depends on the stage-cost structure and the underlying dynamics. But to make the advantage theoretically more precise—relative to baselines that treat the peer as an “expert”—we analyze a class of dynamic games in which each agent’s policy is linear with respect to its own intent parameter $\theta^k$. This structure arises when the Hessian of the stage cost $g_k(\cdot)$ with respect to states and control actions is independent of $\theta^k$, while the gradient terms depend linearly on $\theta^k$. Such a setting is common in goal-reaching tasks where the intent parameter directly represents the goal position.
\begin{proposition}[\textbf{N-PACE v.s treating the peer as an expert}]
\label{propos}
Suppose the stage costs $g_t^k(s_t,a_t^i,a_t^j;\theta^k)$ in \eqref{eq:J} have Hessians with respect to $(s,a^i,a^j)$ that are independent of $\theta^k$, with $\theta^k$ entering only linearly in the first-order terms, and on any compact operating set $s_t \in \Omega$, the policy of each agent admits the form $\pi_t^k(s_t;\theta^k,\hat\theta_t^{-k})=M_t^k(s_t,\hat\theta_t^{-k})\,\theta^k+S_t^k(s_t,\hat\theta_t^{-k})$ with $M_t^k,S_t^k$ bounded. Also, suppose each agent updates its peer parameter online by the gradient descent learning rule \eqref{eq:learning_dy}, then the convergence of intent estimation in N-PACE is guaranteed for sufficiently small learning rate $\alpha$, whereas methods that treat the peer as an expert/complete-information agent can fail to converge for any value of learning rate. 
\end{proposition}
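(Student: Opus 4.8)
\section*{Proof proposal}

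The plan is to pass to error coordinates $e^k_t := \hat\theta^k_t - \theta^k$ and compare the resulting recursions under N-PACE against the expert baseline, exploiting the affine-in-own-parameter structure $\pi_t^k(s_t;\theta^k,\hat\theta_t^{-k})=M_t^k(s_t,\hat\theta_t^{-k})\theta^k+S_t^k(s_t,\hat\theta_t^{-k})$. First I would analyze N-PACE. Here agent $i$ models $a_t^j$ by $\pi_t^j(s_t;\hat\theta_t^j,\hat\theta_t^i)$ while agent $j$ actually plays $\pi_t^j(s_t;\theta^j,\hat\theta_t^i)$; because the centralized-tracking property established above makes both evaluations share the same conditioning estimate $\hat\theta_t^i$, the $M_t^j,S_t^j$ terms cancel \emph{exactly} and the model residual collapses to $M_t^j(s_t,\hat\theta_t^i)\,e_t^j$. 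Substituting into the gradient rule \eqref{eq:learning_dy} yields the exact, decoupled recursion
\[
e^j_{t+1}=\bigl(I-\alpha\,(M_t^j)^\top M_t^j\bigr)e_t^j,\qquad e^i_{t+1}=\bigl(I-\alpha\,(M_t^i)^\top M_t^i\bigr)e_t^i,
\]
with $M_t^k=M_t^k(s_t,\hat\theta_t^{-k})$ evaluated along the realized trajectory. Since each $(M_t^k)^\top M_t^k$ is positive semidefinite and uniformly bounded (by the boundedness hypothesis on $M_t^k$), under the standard persistent-excitation condition $(M_t^k)^\top M_t^k\succeq\mu_k I\succ 0$ along the trajectory, any $\alpha<2/\sup_t\|M_t^k\|^2$ makes each factor a uniform contraction, $\|e_{t+1}^k\|\le(1-\alpha\mu_k)\|e_t^k\|$; hence $e_t^k\to0$ geometrically and both estimates converge to the true intents. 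The crucial point is that this decoupling is exact rather than perturbative, so the dependence of $M_t^k$ on the (also converging) peer estimate does not disturb the per-step contraction.

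Next I would treat the expert baseline, in which agent $i$ instead models $a_t^j$ by $\pi_t^j(s_t;\hat\theta_t^j,\theta^i)$, using the \emph{true} $\theta^i$, even though agent $j$ plays with its own estimate $\hat\theta_t^i\neq\theta^i$. Now the $M,S$ terms are evaluated at different second arguments and no longer cancel; linearizing the residual about $e=0$ (where $\hat\theta^i=\theta^i$) gives $\mathrm{err}_t^j\approx M_t^j e_t^j-D_t^j e_t^i$ with the cross-sensitivity $D_t^j:=\partial_\vartheta\bigl[M_t^j(s_t,\vartheta)\theta^j+S_t^j(s_t,\vartheta)\bigr]\big|_{\vartheta=\theta^i}$, and symmetrically for $i$. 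Stacking $z_t=(e_t^i;e_t^j)$, the coupled error obeys $z_{t+1}=(I-\alpha A_t)z_t$ with
\[
A_t=\begin{pmatrix}(M_t^i)^\top M_t^i & -(M_t^i)^\top D_t^i\\ -(M_t^j)^\top D_t^j & (M_t^j)^\top M_t^j\end{pmatrix}.
\]
The off-diagonal blocks $-(M_t^k)^\top D_t^k$ are the signature of the unmodeled peer-learning bias, and they render $A_t$ generically non-symmetric, in contrast to the block-diagonal N-PACE case.

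The decisive step is a spectral criterion for the frozen (or uniformly slowly varying) system: $z_{t+1}=(I-\alpha A)z_t$ is asymptotically stable for some $\alpha>0$ if and only if every eigenvalue of $A$ has strictly positive real part, since for any eigenvalue $\lambda$ with $\mathrm{Re}\,\lambda\le 0$ one has $|1-\alpha\lambda|\ge 1$ for all $\alpha>0$. For N-PACE, $A$ is block-diagonal and, under persistent excitation, positive definite, so the criterion holds automatically. For the expert baseline I would exhibit a minimal scalar-intent instance ($\theta^i,\theta^j\in\mathbb{R}$) in which $A=\bigl(\begin{smallmatrix}a&-b\\-c&d\end{smallmatrix}\bigr)$ has $a=(M^i)^2>0$, $d=(M^j)^2>0$, yet the cross-coupling is strong enough that $\det A=ad-bc<0$; an opposite-sign pair of real eigenvalues then appears, so the negative one gives $|1-\alpha\lambda|>1$ for \emph{every} $\alpha>0$ and the origin is unstable at all learning rates. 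This is realizable precisely when $D^iD^j>M^iM^j$, i.e.\ when the policy's sensitivity to the peer's parameter dominates its sensitivity to its own parameter---exactly the regime the expert assumption ignores.

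The main obstacle I anticipate is making the non-convergence claim uniform over the time-varying, trajectory-dependent matrices $A_t$: the spectral argument is cleanest for a constant $A$, so I would either specialize to a stationary operating point on $\Omega$ (a legitimate instance, as the claim asserts only that the expert method \emph{can} fail) or invoke a slowly-varying/averaging argument to transfer the frozen-point instability of the linearization to genuine non-convergence of the nonlinear updates near the true parameters. The parallel technical point is establishing and invoking the persistent-excitation condition on the N-PACE side, which I would state as the standing sufficient condition under which the guaranteed-convergence half of the proposition holds; absent excitation one still obtains non-expansion of the error, but convergence to the true intent requires it.
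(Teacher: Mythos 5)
Your proposal follows essentially the same route as the paper's proof: pass to error coordinates, observe that the shared conditioning estimate makes the N-PACE residual collapse exactly to $M_t^k e_t^k$ so the error recursion is decoupled and contracts under persistent excitation for small $\alpha$, while the expert assumption introduces off-diagonal cross-coupling blocks (your $D_t^k$ is the paper's mean-value-form $C_t^k$, an exact rather than first-order version of the same sensitivity) that can destroy the contraction. The only substantive differences are that the paper carries a bounded-variance measurement noise term and does the contraction in mean square, whereas your deterministic spectral argument --- in particular the explicit $2\times 2$ instance with $\det A<0$ forcing a negative real eigenvalue and hence $|1-\alpha\lambda|>1$ for every $\alpha>0$ --- actually makes the ``can fail for any learning rate'' half more concrete than the paper's qualitative ``need not be a contraction'' statement.
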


\begin{proof}
See Appendix~\ref{app:proof prop}.
\end{proof}

\begin{algorithm}[t]
\caption{N-PACE for agent \( i \)}
\label{alg:N-PACE}
\begin{algorithmic}
  \State \textbf{Initialization:} Initialize \( \hat{\theta}^j_0 \) and \( \hat{\theta}^i_0 \).
  \vspace{0.3em}
  \For{each time step $t < T$}
    \State \textbf{1. Policy Generation:}
    \vspace{0.3em}
    \State \hspace{0.1cm} \textbf{a.} Using (\ref{eq:policy}), generate the policy as:
    \State \hspace{0.1cm} \( \pi^i_t(s_t;\theta^i, \hat{\theta}^j_t)=iLQSolver(s_t,\theta^i,\hat{\theta}^j_t) \) 
    \State \hspace{0.1cm} 
    \vspace{0.3em} \textbf{b.} Apply the control \( a^i_t = \pi^i_t(s_t;\theta^i, \hat{\theta}^j_t) \) 
    \State \textbf{2. Prediction:}
    \vspace{0.3em}
    \State \hspace{0.1cm} \textbf{a.} Predict the \( j \)'s policy as
     \( \pi^j_t(s_t;\hat{\theta}^j_t, \hat{\theta}^i_t) \) and model
    \State \hspace{0.1cm} how \( j \)'s predict your  policy \( \pi^i_t(s_t;\hat{\theta}^i_t, \hat{\theta}^j_t) \) using the
    \State \hspace{0.1cm} centralized solver \( iLQSolver(s_t,\hat{\theta}^i_t,\hat{\theta}^j_t) \)
    \vspace{0.3em}
    \State \hspace{0.1cm} \textbf{b.} compute the gradient of each predicted policy 
    \State \hspace{0.1cm} with respect to their corresponding parameter \( \nabla_{\hat{\theta}^k_t}{\pi^k_t} \)
    \State \textbf{3. Learning:}
      \State \hspace{0.1cm} \textbf{a.} observe \( s_t, a^j_t \), form the prediction errors.
      \State \hspace{0.1cm} \textbf{b.} Using the prediction and computed gradients apply
      \State \hspace{0.1cm} the learning dynamics as:  
      \( \hat{\theta}^{i}_{t+1} = h^i_l(\hat{\theta}^{i}_t, \hat{\theta}^{j}_t, a^{i}_t, s_t) \)
      \State \hspace{0.1cm} \( \hat{\theta}^{j}_{t+1} = h^j_l(\hat{\theta}^{j}_t, \hat{\theta}^{i}_t, a^{j}_t, s_t) \)
  \EndFor
  \vspace{0.3em}
\end{algorithmic}
\end{algorithm}

\subsection{N-PACE for Intent Communication}
\label{sec: intent}
The N-PACE algorithm also enables intent communication. 
Notice that the learning dynamics of agent \( -k \), denoted as \( h^{-k}_l \), are fundamentally conditioned on the actions \( a^k_t \) taken by agent \( k \). As a result, each agent can modify its actions to influence the learning process of the other.  
To enable this, we propose to introduce an additional intent communication term in the form of \( \eta(\hat{\theta}^k_t - \theta^k)^2 \) to the instantaneous cost function of each agent, \( g^k_\tau(s_\tau, a^i_\tau, a^j_\tau; \theta^k) \), where \( \eta \) is the weighting scalar.  
This leads to the following optimization problem for each agent \( k \):
\begin{equation}
\begin{aligned}
    \min_{a^k_t}\ & J^{*k}_t(s_t, a^k_t, a^{-k}_t)
    = \sum_{\tau=t}^{T} \Big[\, g^k_\tau(s_\tau, a^k_\tau, a^{-k}_\tau; \theta^k) + \eta\,\|\hat{\theta}^k_{\tau} - \theta^k\|^2 \,\Big] \\
    \text{s.t.}\ & s_{\tau+1} = f_t(s_{\tau}, a^k_{\tau}, a^{-k}_{\tau}),\ \ \hat{\theta}^{-k}_{{\tau}+1} = h^{k}_l(\hat{\theta}^{-k}_{\tau}, \hat{\theta}^{k}_{\tau}, a^{-k}_{\tau}, s_{\tau}),\\
    & \hat{\theta}^{k}_{{\tau}+1} = h^{-k}_l(\hat{\theta}^{k}_{\tau}, \hat{\theta}^{-k}_{\tau}, a^{k}_{\tau}, s_{\tau}).
\end{aligned}
\label{eq:opt_problem}
\end{equation}

The influence problem above is computationally expensive and may be intractable for real-time implementation. Remember that the learning dynamics $h^k_l, h^{-k}_l$ themselves require solving a general-sum game, using ILQgames in this case, for their prediction phase and computing the gradient of that prediction.  
To address this challenge, we propose an approximation method for solving the optimization problem in (\ref{eq:opt_problem}). Instead of considering the evolution of the other agent’s parameter learning throughout the entire horizon $[t,T]$ in (\ref{eq:opt_problem}) to form the cumulative cost $J^{*k}$, we focus only on a single-step update of the other agent’s learning as $\hat{\theta}^{-k}_{{t}+1} = h^{k}_l(\hat{\theta}^{-k}_{t}, \hat{\theta}^{k}_{t}, a^{-k}_{t}, s_{t})$, so we can modify the cumulative cost of agent $k$ at each step as
\begin{equation}
    J^{*k}(s_t,\hat{\theta}^{-k}_{t},a^i_t,a^j_t) = J^{k}(s_t,a^i_t,a^j_t) + \eta \|\hat{\theta}^k_{t+1} - \theta^k\|^2,
    \label{eq:signal}
\end{equation}
where $\hat{\theta}^k_{t+1}$ depends on $a^k_t$ through the learning dynamics $h^{-k}_l$.

\begin{figure*}[t]
\centering
\includegraphics[width=\textwidth]{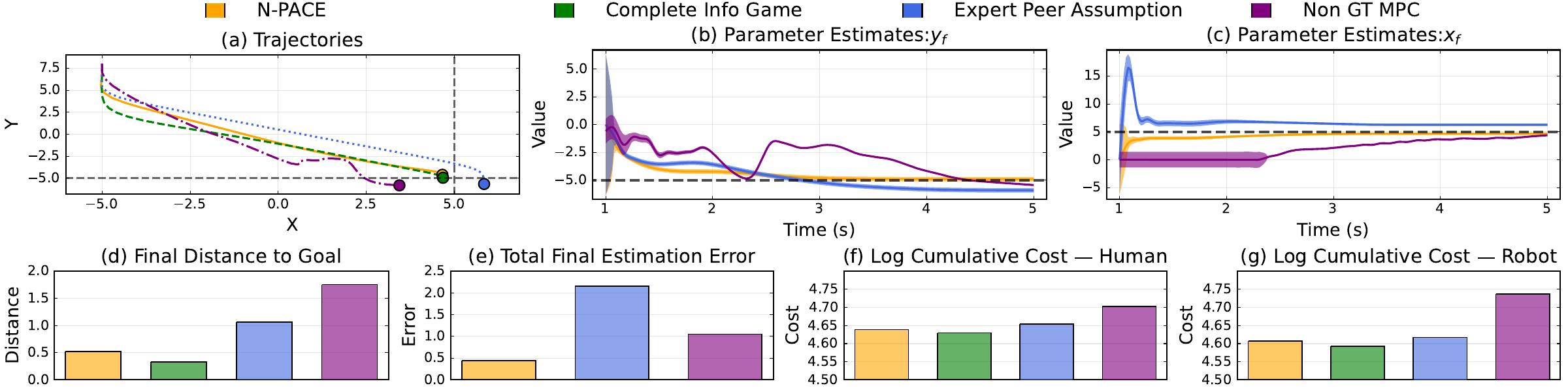}
\caption{Results of Case Study 1, the assistive lunar lander. (a) Trajectory comparison between baselines and N-PACE. (b) and (c) show each agent’s learning performance under different algorithms (d) comparison of final landing position error. (e) Final sum of intent learning error. (f) and (g) cumulative cost comparisons under different approaches for each agent. 
}
\label{fig:exp1_traj}
\end{figure*}

\begin{lemma}[\textbf{Monotone reduction of intent estimation error under intent communication}]\label{lem:monotone-teaching}
Assuming a fixed $t$, ego agent $k$, state $s_t$, peer action $a_t^{-k}$, current beliefs $(\hat{\theta}^k_t,\hat{\theta}^{-k}_t)$, an admissible action domain $\mathcal{A}_t^k\!\subseteq\!\mathbb{R}^{m_k}$, a base objective $C_t^k(a^k)\!=\!J^k(s_t,a^k,a_t^{-k})$, and any learning dynamics $h^{-k}_\ell$. For any $0\le \eta_1<\eta_2$, let action $a_t^{k,\eta}$ be any solution of the optimization problem \emph{(\ref{eq:signal})} at each step $t$,the parameter learning error
$
E_t^k\!\big(a_t^{k,\eta}\big)=\big\|\hat{\theta}^k_{t+1}(a_t^{k,\eta}) - \theta^k\big\|^2
$
satisfies
$
E_t^k\!\big(a_t^{k,\eta_2}\big)\ \le\ E_t^k\!\big(a_t^{k,\eta_1}\big).
$
In particular, $\eta\mapsto E_t^k(a_t^{k,\eta})$ is non-increasing.
\end{lemma}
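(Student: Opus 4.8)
The plan is to treat this as a pure monotonicity statement about a penalized objective and prove it by the standard crossing (exchange) argument, which turns out to be completely agnostic to the internal structure of the learning map $h^{-k}_\ell$. First I would fix the data $(t,k,s_t,a_t^{-k},\hat\theta^k_t,\hat\theta^{-k}_t)$ and abbreviate the two competing scalar functions on $\mathcal{A}_t^k$ as the base cost $C(a):=C_t^k(a)=J^k(s_t,a,a_t^{-k})$ and the teaching error $E(a):=E_t^k(a)=\|\hat\theta^k_{t+1}(a)-\theta^k\|^2$, where $\hat\theta^k_{t+1}(a)$ is produced from $a$ through $h^{-k}_\ell$ exactly as in \eqref{eq:signal}. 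The objective minimized in \eqref{eq:signal} is then $\Phi_\eta(a)=C(a)+\eta\,E(a)$, and by hypothesis a minimizer $a_t^{k,\eta}$ exists for each $\eta\ge 0$; I would pick arbitrary minimizers $a_1\in\arg\min_{\mathcal{A}_t^k}\Phi_{\eta_1}$ and $a_2\in\arg\min_{\mathcal{A}_t^k}\Phi_{\eta_2}$ to accommodate the possibly set-valued $\arg\min$ (the lemma asserts the bound for \emph{any} selected solution).

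The core step is to write the two optimality inequalities and add them. Since $a_1$ is optimal for $\Phi_{\eta_1}$ while $a_2$ is feasible,
\begin{equation*}
C(a_1)+\eta_1 E(a_1)\ \le\ C(a_2)+\eta_1 E(a_2),
\end{equation*}
and since $a_2$ is optimal for $\Phi_{\eta_2}$ while $a_1$ is feasible,
\begin{equation*}
C(a_2)+\eta_2 E(a_2)\ \le\ C(a_1)+\eta_2 E(a_1).
\end{equation*}
Adding these cancels the base-cost terms $C(a_1)+C(a_2)$ exactly, and after rearranging yields $(\eta_2-\eta_1)\big(E(a_2)-E(a_1)\big)\le 0$. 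Because $\eta_2-\eta_1>0$ by assumption, I conclude $E(a_2)\le E(a_1)$, that is $E_t^k(a_t^{k,\eta_2})\le E_t^k(a_t^{k,\eta_1})$; applying this to an arbitrary pair $\eta_1<\eta_2$ gives that $\eta\mapsto E_t^k(a_t^{k,\eta})$ is non-increasing.

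I expect essentially no analytical difficulty here; the points that need care are structural rather than computational. The crucial observation I would state explicitly is that the base cost $C$ carries no dependence on $\eta$, so it drops out upon summing the two inequalities—this exact cancellation is what makes the conclusion independent of the (possibly highly nonlinear, non-smooth) learning dynamics $h^{-k}_\ell$, of any convexity of $C$ or $E$, and of the geometry of $\mathcal{A}_t^k$. The one hypothesis I must lean on is attainment of the minima, implicit in the phrase ``let $a_t^{k,\eta}$ be any solution''; if one prefers not to assume existence, I would instead phrase the conclusion with infima via near-minimizers and an $\varepsilon$-argument, letting $\varepsilon\to 0$. The main obstacle, such as it is, is therefore purely expository: making clear that the lemma is a robust consequence of the penalty structure of \eqref{eq:signal} and requires no regularity of the teaching map, rather than an argument about the coupled intent dynamics themselves.
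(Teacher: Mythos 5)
Your proposal is correct and is essentially identical to the paper's own proof: the same exchange argument, writing the two optimality inequalities for $a_t^{k,\eta_1}$ and $a_t^{k,\eta_2}$, summing them so the $C_t^k$ terms cancel, and dividing by $\eta_2-\eta_1>0$. The additional remarks on set-valued argmins and the $\varepsilon$-near-minimizer variant are sensible refinements but do not change the substance.
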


\begin{proof}
See Appendix~\ref{app:proof:monotone-teaching}.
\end{proof}

\begin{corollary}[\textbf{Intent communication never hurts}]\label{cor:never-hurts}
Let $a_t^{k,0}\in\arg\min_{\mathcal{A}_t^k} C_t^k$ be any no-teaching optimizer and, for $\eta\ge 0$, let $a_t^{k,\eta}\in\arg\min_{\mathcal{A}_t^k}\Phi_t^{k,\eta}$ with $\Phi_t^{k,\eta}:=C_t^k+\eta E_t^k$. Then for every $\eta>0$,
$
E_t^k\!\big(a_t^{k,\eta}\big)\ \le\ E_t^k\!\big(a_t^{k,0}\big).
$
Moreover, if $\arg\min_{\mathcal{A}_t^k} C_t^k \,\cap\, \arg\min_{\mathcal{A}_t^k} E_t^k =\varnothing$ (i,e. disjoint argmin sets), then there exists $\eta^\star>0$ such that for all $\eta>\eta^\star$,
$
E_t^k\!\big(a_t^{k,\eta}\big)\ <\ E_t^k\!\big(a_t^{k,0}\big).
$
\end{corollary}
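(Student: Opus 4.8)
The plan is to obtain the weak inequality as an immediate specialization of Lemma~\ref{lem:monotone-teaching} and then to upgrade it to the strict inequality via a short penalty-method estimate that activates the disjointness hypothesis. For the first claim, note that at $\eta=0$ the teaching objective reduces to $\Phi_t^{k,0}=C_t^k$, so the no-teaching optimizer $a_t^{k,0}\in\arg\min_{\mathcal{A}_t^k}C_t^k$ is precisely a minimizer of $\Phi_t^{k,0}$, i.e. the $\eta=0$ instance of the optimization treated in Lemma~\ref{lem:monotone-teaching}. Applying the monotonicity conclusion of that lemma with $\eta_1=0<\eta_2=\eta$ then yields $E_t^k(a_t^{k,\eta})\le E_t^k(a_t^{k,0})$ for every $\eta>0$, which is the first assertion.

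For the strict part I would write $C:=C_t^k$, $E:=E_t^k$, $C_{\min}:=\min_{\mathcal{A}_t^k}C=C(a_t^{k,0})$, fix any $a^E\in\arg\min_{\mathcal{A}_t^k}E$, and set $E^\star:=E(a^E)$. Testing the optimality of $a_t^{k,\eta}$ for $\Phi_t^{k,\eta}$ against the competitor $a^E$ gives $C(a_t^{k,\eta})+\eta E(a_t^{k,\eta})\le C(a^E)+\eta E^\star$; combining this with $C(a_t^{k,\eta})\ge C_{\min}$ and dividing by $\eta>0$ produces the a priori bound $E(a_t^{k,\eta})\le E^\star+\big(C(a^E)-C_{\min}\big)/\eta$. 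Since $E^\star$ is the global minimum of $E$, a squeeze then shows $E(a_t^{k,\eta})\to E^\star$ as $\eta\to\infty$.

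The disjointness hypothesis is exactly what turns this limit into a strict gap at finite $\eta$. Because $a_t^{k,0}\in\arg\min C$ but, by disjointness, $a_t^{k,0}\notin\arg\min E$, the quantity $\delta:=E(a_t^{k,0})-E^\star$ is strictly positive; symmetrically, $a^E\in\arg\min E$ yet $a^E\notin\arg\min C$ forces the penalty constant $C(a^E)-C_{\min}$ to be strictly positive, so $\eta^\star:=\big(C(a^E)-C_{\min}\big)/\delta$ is finite and positive. For any $\eta>\eta^\star$ the a priori bound then gives $E(a_t^{k,\eta})\le E^\star+\big(C(a^E)-C_{\min}\big)/\eta<E^\star+\delta=E(a_t^{k,0})$, the claimed strict inequality. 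The computations are elementary, so the only point requiring genuine care—and the main obstacle I anticipate—is the twofold use of the disjointness condition: it must be invoked once to make $\delta>0$ and once to make the penalty constant positive, both of which are needed for $\eta^\star$ to be well defined. Existence of the three argmin sets is inherited from the hypotheses of Lemma~\ref{lem:monotone-teaching} (compactness of $\mathcal{A}_t^k$ together with continuity of $C$ and $E$ in $a^k$), and I would carry it as a standing assumption rather than re-derive it here.
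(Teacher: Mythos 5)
Your proof is correct, and for the weak inequality it is exactly the paper's argument: specialize Lemma~\ref{lem:monotone-teaching} to $\eta_1=0$, $\eta_2=\eta$, noting $\Phi_t^{k,0}=C_t^k$. For the strict part you and the paper both compare against a global minimizer $a_E\in\arg\min E_t^k$ and set a threshold of the form $\eta^\star=\Delta C/\Delta E$, but your execution of the last step is actually tighter than the paper's. The paper computes $\Phi_t^{k,\eta}(a_E)-\Phi_t^{k,\eta}(a_t^{k,0})=\Delta C-\eta\,\Delta E<0$ and then asserts that any minimizer satisfies $E_t^k(a_t^{k,\eta})\le E_t^k(a_E)=E_\star$; as written that inference is not justified (knowing $a_E$ beats $a_t^{k,0}$ only shows $a_t^{k,0}$ is not optimal, and since $E_\star$ is the global minimum the claimed bound would force exact equality $E_t^k(a_t^{k,\eta})=E_\star$, which need not hold). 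Your version closes this gap: testing optimality of $a_t^{k,\eta}$ against $a_E$ and using $C(a_t^{k,\eta})\ge C_{\min}$ gives the explicit penalty estimate $E(a_t^{k,\eta})\le E^\star+\bigl(C(a_E)-C_{\min}\bigr)/\eta$, which for $\eta>\eta^\star$ drops strictly below $E^\star+\delta=E(a_t^{k,0})$. Two minor remarks: your $\eta^\star$ depends on the particular no-teaching optimizer chosen (via $\delta=E(a_t^{k,0})-E^\star$), whereas the paper's $\Delta E=E_0-E_\star$ with $E_0=\min_{a\in\arg\min C}E(a)$ yields a threshold uniform over all such optimizers --- either reading is consistent with the statement; and the positivity of the penalty constant $C(a_E)-C_{\min}$ is not actually needed for the conclusion (if it were zero the bound would give strictness for every $\eta>0$), only $\delta>0$ is essential, so your ``twofold use of disjointness'' is mild over-caution rather than a logical necessity.
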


\begin{proof}
The non-strict inequality is Lemma~\ref{lem:monotone-teaching} with $\eta_1=0$ and $\eta_2=\eta$. For strictness, Define
\[
C_\star:=\min_{\mathcal{A}_t^k} C_t^k,\quad
E_0:=\min_{a\in \arg\min C_t^k} E_t^k(a),\quad
E_\star:=\min_{\mathcal{A}_t^k} E_t^k(a).
\]
Disjointness implies $E_\star<E_0$. Pick $a_E\in\arg\min E_t^k$ and set $\Delta C:=C_t^k(a_E)-C_\star\ge 0$, $\Delta E:=E_0-E_\star>0$. Then
\[
\Phi_t^{k,\eta}(a_E)-\Phi_t^{k,\eta}(a_t^{k,0})=\Delta C-\eta\,\Delta E.
\]
Choosing $\eta^\star:=\Delta C/\Delta E$ makes this negative for all $\eta>\eta^\star$, hence any minimizer $a_t^{k,\eta}$ satisfies
$E_t^k(a_t^{k,\eta})\le E_t^k(a_E)=E_\star<E_0=E_t^k(a_t^{k,0})$.
\end{proof}

 We will show the effectiveness of this approximation for intent communication and empirical evidence validating Lemma \ref{lem:monotone-teaching} and Corollary \ref{cor:never-hurts} in Case Study 3 in Sec. \ref{case}.


\section{Case Studies}
\label{case}
In this section, we show that explicitly modeling a peer’s learning dynamics with N-PACE improves task completion (Case Study 1) and safety (Case Studies 2–3) compare to baseline methods. We also evaluate the robustness of N-PACE to misspecified initial beliefs about the peer’s parameter learning (Case Study 2), and present both an ablation on initialization error and the effect of intent communication (Case Study 3).
Across all case studies, the learning updates use gradient-descent-based–based Bayesian inference with a Gaussian belief. (Simulation Codes are available \href{https://github.com/YousefSoltanian/AAMAS2026-NPACE}{Github}.
 mathematical details of baselines are available in the Appendix)

\subsection{Case Study 1: Assistive Lunar Lander}
Our first example focuses on the importance of N-PACE in task completion. In this example, we study a shared control problem, an assistive lunar lander scenario in which a human agent controls only the angle of the lunar lander $\phi$ by applying the torque $T$, while an autonomous agent controls the thrust force $F$ along the direction of the lunar lander’s angle (assuming no gravity). 
Both agents cooperate to land the lunar lander at the final position \((x_f, y_f)\), but with asymmetric information: the human knows only the final desired \( x \)-position \( x_f \), while the autonomous agent knows only the final desired \( y \)-position \( y_f \). As a result, the human must learn the parameter \( \hat{y}_f \), and the autonomous agent must learn \( \hat{x}_f \) based on observed interactions. The underlying nonlinear dynamics follow $\dot{x} = v_x$, $\dot{y} = v_y$, $\dot{\phi} = \omega$, $\dot{v}_x = F \sin{\phi}$, $\dot{\omega} = T$  the running costs for the human $g_h$ and the agent $g_a$ are defined as $g_a = 10 (x - \hat{x}_f)^2 + 10 (y- y_f)^2 + 10 (\phi - \frac{\pi}{2})^2 + 10 (v_x^2 + v_y^2 + \omega^2) + F^2$ and $g_h = 10 (x - x_f)^2 + 10 (y - \hat{y}_f)^2 + 10 (\phi - \frac{\pi}{2})^2+10(v_x^2 + v_y^2 + \omega^2) + T^2.$
The game has a time horizon of \( T = 5s \) with a sampling time of \( \Delta T = 0.1s \) that is used to discretize the above dynamics.  
In this example, we set \( x_f = 5 \) and \( y_f = -5 \), while the lunar lander starts at the initial position \( (-5,8) \) with a zero initial angle.
We conducted three simulations for this case study:  
1) a complete information game where both agents have full knowledge of \( x_f \) and \( y_f \) as a ground truth baseline, 
2) a game theoretic scenario where each agent treats its peer as an expert, assuming the other agent has complete information, 3)a non-game-theoretic model predictive control (MPC) inspired by the baselines in \cite{peters2024contingency} where each agent uses a history of observed data and fits a linear policy to its partner policy parameters (including the unknown goal position) and using this policy perform a receding horizon MPC, and finally
4) a scenario in which agents operate according to N-PACE.

\begin{figure*}[t]
\centering
\includegraphics[width=\textwidth]{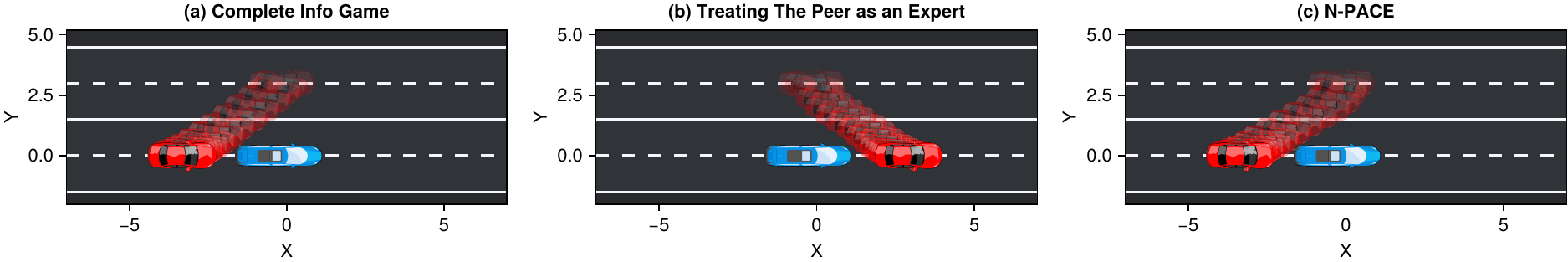}
\caption{Trajectories of the lane-merging case study, the trajectories are plotted relative to the blue car coordinate (a) Safe lane merging under complete information (b) baseline methods, a risky behavior in merging from the aggressive driver (red). (c)Non-risky lane merging in N-PACE similar to the complete game, despite a 30\% error in modeling the peer's initial estimate.}
\label{fig:exp2}
\end{figure*}

\begin{figure}[t]
\centering
\includegraphics[width=\columnwidth]{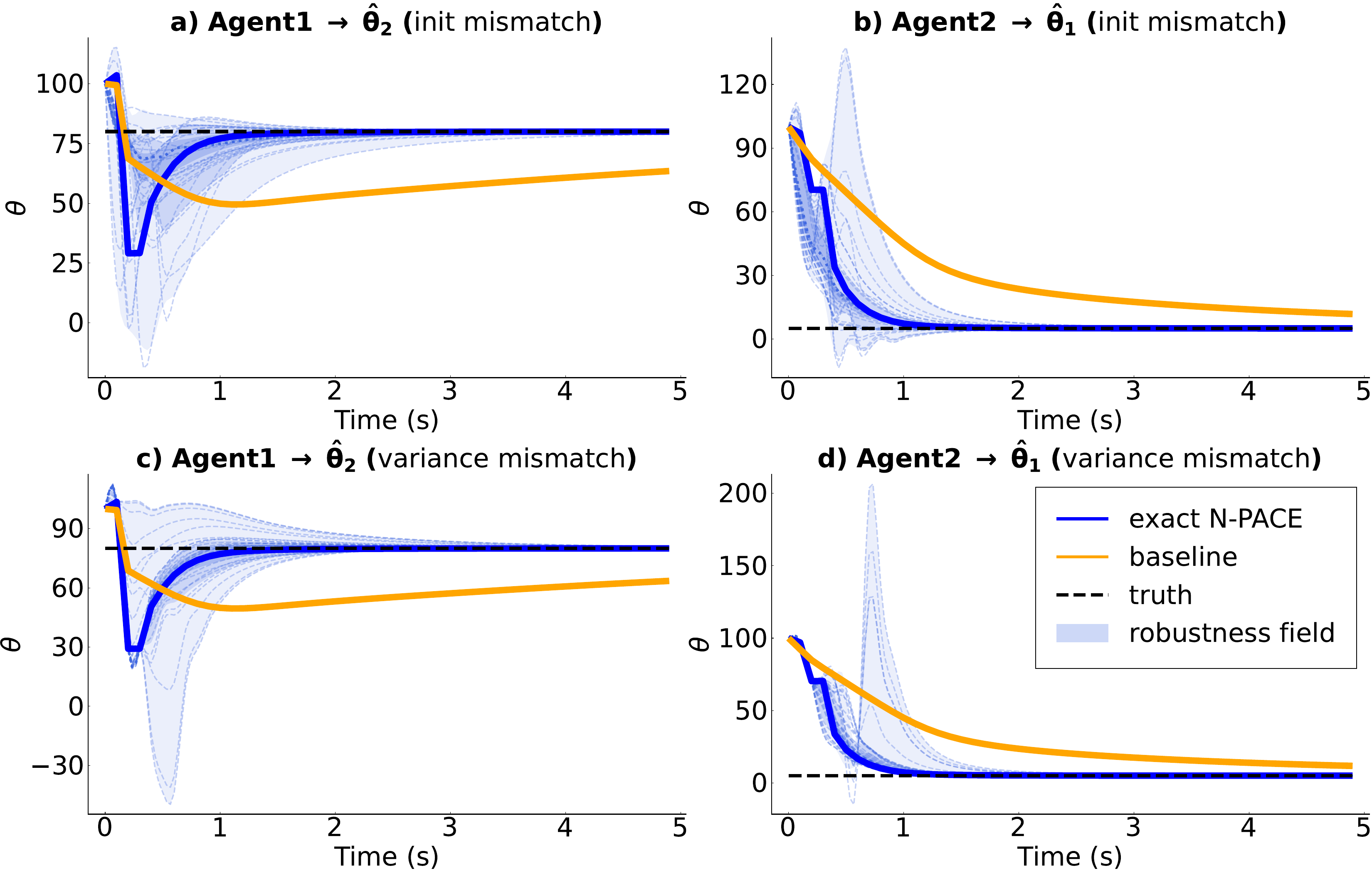}
\caption{a) \& b) Robustness of N-PACE in intent inference under mismatched assumptions about the peer’s initial belief. c) \& d) robustness to wrong choice of initial variance in peer's GD-based Bayesian inference. Envelopes representing 50 mismatched runs.
}
\label{fig:exp2_est}
\end{figure}

Figure \ref{fig:exp1_traj} shows a summary of the results, where Fig. \ref{fig:exp1_traj} (a) presents the trajectory of the lunar lander under different algorithms, explicitly demonstrating the superiority of N-PACE compared to other approaches.  
It can be seen that the final position of the lunar lander in N-PACE is closer to the desired final position, and the entire trajectory aligns more closely with the complete information game trajectory (the local Nash equilibrium). In contrast, treating the other agent as an expert or the MPC-based baseline results in the lunar lander landing at a different position, emphasizing the importance of modeling the learning dynamics of the other agent for successful task completion. Figs. \ref{fig:exp1_traj} (b) and (c) present the results of parameter learning performance for each agent under different algorithms, illustrating how N-PACE fast and successfully converges to the true final position goals in comparison to other baselines.  
Notably, treating the other agent as an expert can lead to a significant overshoot in the learning process. Intuitively, this may be due to a high initial bias in assuming that the peer agent has complete information. \ref{fig:exp1_traj} (f) and (g) interestingly show how N-PACE achieves a lower cumulative cost for each agent compared to other methods in this case study.

\subsection{Case Study 2: Lane Merging}
\label{robust}

This example aims to study the N-PACE robustness to imperfect modeling of the peers' learning dynamics. In our second case study, we consider a lane-merging task between two autonomous vehicles. The first car (blue) moves in a straight line, while the second car (red) attempts to merge into its lane. The blue car can only accelerate or decelerate $a_1$, whereas the red car can control both its acceleration along its direction of motion $a_2$  and its steering angle velocity $\delta$. The game has a 5s horizon with a sampling time of $\Delta{T}=0.1s$ for dynamic discretization. During the game, Each agent is unaware of the aggressiveness level of others in their cost function and must learn it in real-time.
The underlying dynamic is defined as $\dot{x}_1 = v_1$, $\dot{v}_1 = a_1$, $\dot{x}_2 = v_2 \cos{\phi_2}$, $\dot{y}_2 = v_2 \sin{\phi_2}$, $\dot{\phi}_2 = \delta_2,$ $\dot{v}_2 = a_2$, and the running cost $g$ for each car is $g_{blue} = 0.1 (x_1 - 25)^2 + 0.1 v_1^2+ \theta_1 e^{-( d(t) - d_{\text{safe}} )^2} + a_1^2,$ and $g_{red} = 1 y_2^2 + 10 \phi_2^2 + 0.1 v_2^2+ 0.1 (x_2 - 25)^2
+ \theta_2 e^{-( d(t) - d_{\text{safe}} )^2} + a_2^2 + \delta_2^2$.
where \( d(t)=(x_1 - x_2)^2 + y_2^2 \) is the distance between the two cars, \( d_{\text{safe}} \) is the safe distance they aim to maintain, and \( \theta_k \) is a parameter in the range \([0,100]\) that represents the aggressiveness of each agent (lower \( \theta_k \) corresponds to more aggressive behavior). 
Each agent initially assumes that the other agent is a non-aggressive, safety-conscious driver with \( \hat{\theta}^k_0 = 100 \). However, both agents exhibit some level of aggressiveness: the first agent (the blue car) is highly aggressive with \( \theta_1 = 5 \), while the second agent (the red car) has a moderate aggressiveness level of \( \theta_2 = 80 \). 

To test robustness to misspecified initial beliefs, we introduced an initialization error $e_{init} = \pm70\%$ in each agent’s guess of the other’s initial estimation and randomly sampled $50$ "initial parameter estimate of others" for each agent from this error bound. Equivalently, each agent $k$ for each run uniformly samples from the range $\hat{\theta}^k \in [30,170]$ and sets it as the initial estimate of agent $-k$ (although the true initial estimates are \( \hat{\theta}^k_0 = 100 \)).
Across all runs, as shown in figure \ref{fig:exp2_est} a) and b), N-PACE converged to the peer’s true intent and outperformed the baseline, indicating robustness to initialization mismatch. For higher error values, we observed N-PACE starting to show diverging behavior. Also, to better represent the effect of errors in the modeling of peers' learning dynamics, we used an error $e_{var} = \pm55\%$ in how each agent models the initial variance in gradient-based Bayesian inference of the other agent as a way to model misspecified convergence speeds of the modeled learning dynamics. We randomly sampled $50$ variances from this range and ran our simulations. Again, across all cases as shown in figure \ref{fig:exp2_est} c) and d) we observed N-PACE converging and outperforming the baseline.
Indicating how N-PACE can lead to better intent inference despite the mismatch in modeling the initial estimate or learning dynamics of others. 
Qualitative trajectory comparison results and intent inference Results appear in Fig.~\ref{fig:exp2} and Fig.~\ref{fig:exp2_est}  

\begin{figure}[t]
\centering
\includegraphics[width=\columnwidth]{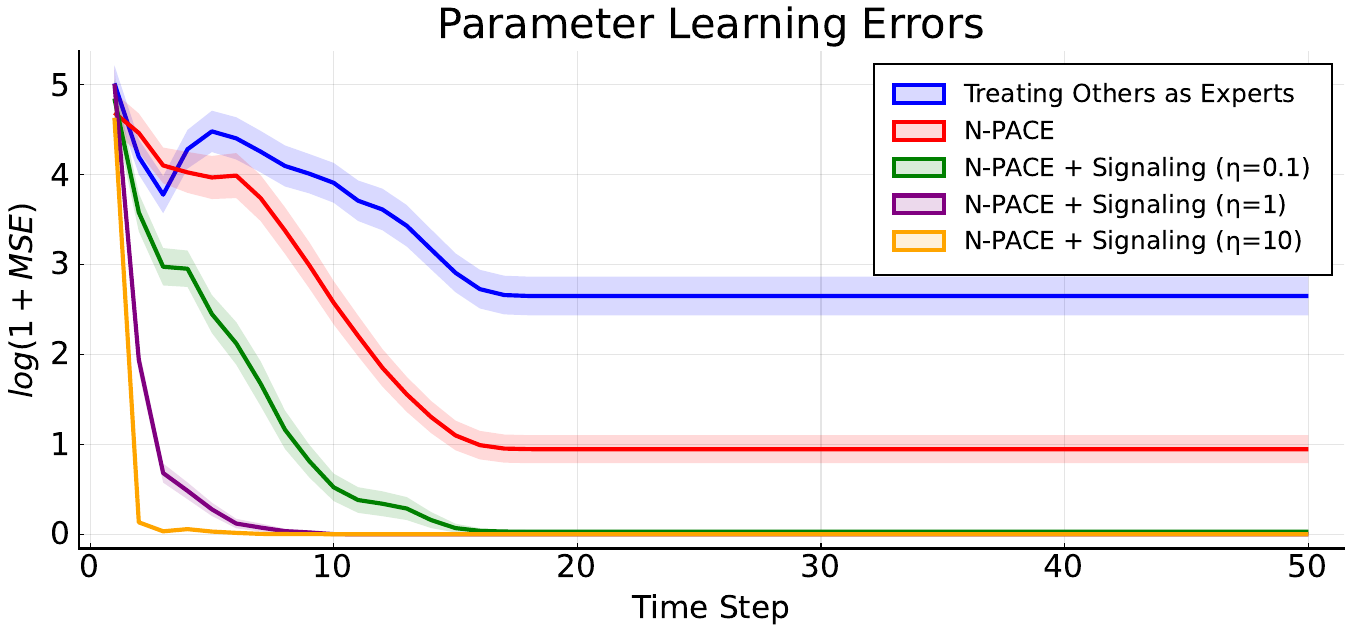}
\caption{ Comparison of \(\log(1+\text{MSE})\) learning error across 300 Monte Carlo simulations of case study 3 with random values of aggressiveness.
}
\label{fig:exp3_est}
\end{figure}
\vspace{-4pt}

\subsection{Case Study 3: Interaction Driving}
\begin{figure*}[t]
\centering
\includegraphics[width=\textwidth]{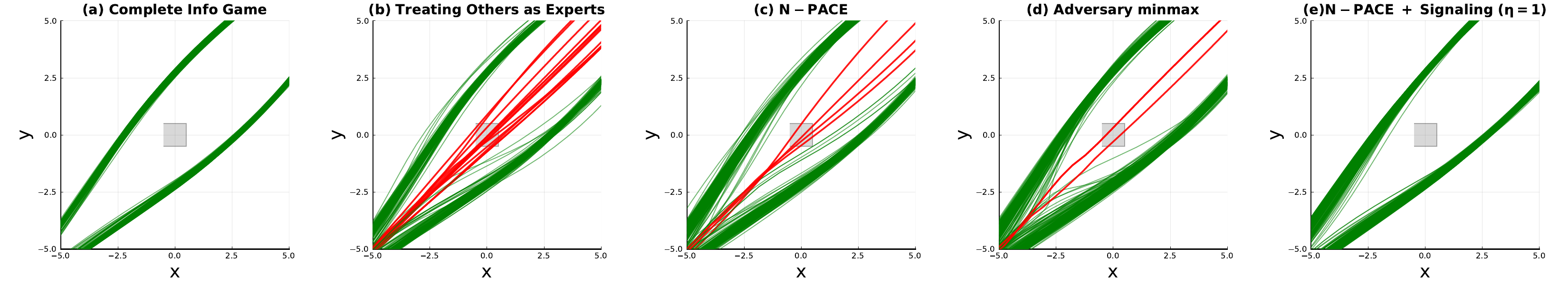}
\caption{Trajectory \((x,y)\) results of 300 simulations of the two-player incomplete information intersection game with random values of aggressiveness. Gray boxes indicate the collision area. (a) The complete information game results in no collisions. (b) The expert peer assumption method results in 13 collisions. (c) N-PACE results in only 4 collisions. (d) A conservative adversarial baseline still results in 2 collision cases. (e) N-PACE, combined with intent communication, achieves 0 collisions and the closest trajectories to those in the complete information game.
}
\label{fig:exp3_traj}
\end{figure*}

In our third case study, to extensively study N-PACE and the intent communication approach proposed in (\ref{eq:signal}), we conduct Monte Carlo simulations of an intersection driving game between two agents under incomplete information, similar to the scenario shown in Fig. \ref{fig1}. In this case study, two autonomous vehicles attempt to navigate through an intersection while being unaware of each other's aggressiveness parameter, \(\theta_k\). The agents are only able to control their acceleration along with the direction of their movement $x,y$. The dynamic of the game follows kinematic equations $\dot{x}_1 = v_1$, $\dot{v}_1 = a_1$, $\dot{y}_2 = v_2$, $\dot{v}_2 = a_2$, and each agent's cost function is formed as $g_1 = (x_1 - 8)^2 + v_1^2 
\;+\; 10\,\theta_1\,\exp\bigl(-\bigl(d(t) -d_{\mathrm{safe}}\bigr)\bigr)\;+\; a_1^2$ and $g_2= (y_2 - 8)^2 + v_2^2 
\;+\; 10\,\theta_2\,\exp\bigl(-\bigl(d(t) - d_{\mathrm{safe}}\bigr)\bigr)\;+\; a_2^2.$
where $d(t) = x_1^2 +y_2^2$ is the distance between two vehicles and $d_{safe}$ is the safe distance they try to maintain.
The aggressiveness of each agent is also defined same as Case Study 2. The game continues for $5$ sec with a sampling time of $\Delta{T} = 0.1$ sec, resulting in 50 samples.

In this example, each agent's aggressiveness parameter, \(\theta_k\), takes values in the range $\theta_k \in [20,50]$, where a lower \(\theta_k\) indicates a more aggressive agent. In our Monte Carlo study, we sampled 300 aggressiveness values uniformly from \([20,50]\) for each agent. Using these 300 sampled values, we ran the game 300 times under five different scenarios:  
1) Complete information,  
2) The expert peer agent assumption,
3) An adversary min-max game baseline where each agent $k$, instead of trying to do intent inference, at each step decides conservatively based on $min_{a^k}max_{\theta^{-k} \in [20,50]}J^k(s_t,a^i_t,a^j_t)$
4) N-PACE,  
5) N-PACE with intent communication, as proposed in Sec. \ref{sec: intent}, for different intent communication weighting values \(\eta = 0.1, 1, 10\).
The benchmarking results of these studies are shown in Table (\ref{tab:cs3-benchmark}). As expected, under the complete information game scenario, no collisions were observed.
The peer expert assumption resulted in 13 collisions. In contrast, N-PACE resulted in only 4 collisions. More interestingly, although the minmax approach was a conservative approach with a high average intersection crossing time, it still resulted in 2 collision cases, where on the other hand our intent communication approach (built on top of N-PACE) not only ensured safe crossing for all 300 scenarios, it also achieved the fastest crossing time among all incomplete-information scenarios, ensuring both safety and smooth interaction. This was also the case for \(\eta = 10\), while only a single collision occurred for \(\eta = 0.1\). The trajectories from these 300 simulations for different scenarios are shown in Fig. \ref{fig:exp3_traj}, where we plot the \(x\)-position of the first agent against the \(y\)-position of the second agent.
These results highlight the improvements in both inference accuracy and safety assurance using the proposed intent communication framework built on top of N-PACE, although at the expense of increasing the average control effort (see table \ref{tab:cs3-benchmark}).
Finally, Fig. \ref{fig:exp3_est} presents the log of the Mean Squared learning Error (MSE) of both agents across all 300 Monte Carlo simulations. We use \(\log(1 + \text{MSE})\) for better visualization and comparison.  Interestingly, the results in Fig. \ref{fig:exp3_est} align well with the theoretical results in corollary \ref{cor:never-hurts} and lemma \ref{lem:monotone-teaching} that state increasing $\eta$ always reduces the estimation error. 
These empirical results highlight how intentional communication improves learning performance, particularly when the signaling is intensified using higher \(\eta\).

\begin{table}[h]
\centering
\small
\setlength{\tabcolsep}{2pt}
\renewcommand{\arraystretch}{0.8}
\caption{Case study 3 benchmark}
\label{tab:cs3-benchmark}
\begin{tabular*}{\columnwidth}{@{\extracolsep{\fill}} lccc @{}}
\toprule
\thead{Method\\} &
\thead{Failure rate\\} &
\thead{Average control\\effort [$m/s^2$]}  &
\thead{Average time\\to cross [s]} \\
\midrule
Complete game            & \textbf{0.0\%} & 5.00 & 1.497 $\pm$ 0.043 \\
Expert peer assumption   & 4.33\%  & 4.92       & $1.593 \pm 0.044$ \\
N\text{-}PACE            & 1.33\% & 4.86      & $1.599 \pm 0.054$ \\
N\text{-}PACE + $\eta=0.1$  & 0.33\% & 5.42 & 1.521 $\pm$ 0.009 \\
N\text{-}PACE + $\eta=1.0$  & \textbf{0.0\%} & 5.44 & 1.499 $\pm$ 0.029 \\
N\text{-}PACE + $\eta=10.0$ & \textbf{0.0\%} & 5.45 & \textbf{1.489 $\pm$ 0.035} \\
Minmax                   & 0.66\% & 6.23         & $1.651 \pm 0.165$ \\
\bottomrule
\end{tabular*}
\vspace{-12pt}
\end{table}

\section{CONCLUSIONS AND FUTURE WORK}

In this paper, we presented a Nonlinear Peer-Aware Cost Estimation (N-PACE) framework for solving general-sum dynamic games with continuous action-states and continuous intent space, which relies on treating the peer agent as a learning agent while inferring its cost parameters. We demonstrated the importance of accounting for the learning behavior of others and showed how this framework outperformed the approach of treating the peer agent as a complete-information. Additionally, we proposed an intent communication framework built on top of N-PACE for signaling in incomplete information games. 
N-PACE assumes access to the learning dynamics of the peer agent. While such information is often available in many multi-agent systems, there are scenarios—such as human-robot interactions—where the learning dynamics may be unknown. The question of "learning the learning dynamics" presents an interesting research direction for future work, particularly in light of recent studies exploring the use of transformers for this purpose \cite{tian2023towards,kedia2024interact}. Also,  the mentioned works or cognitive research finding stating that humans can be modeled as Bayesian learners \cite{griffiths2010probabilistic}, pave the way for future applications of N-PACE in HRI.
\begin{acks}
This material is based upon work supported by the National Science Foundation under Grant No. 1944833. Any opinions, findings, and conclusions or recommendations expressed in this material are those of the author(s) and do not necessarily reflect the views of the National Science Foundation.
\end{acks}
\bibliographystyle{ACM-Reference-Format} 
\bibliography{sample}
\vspace{-2pt}
\appendix            
\label{appendix}
\section{Proofs}
\begin{proof}[Proof of Proposition~\ref{propos}]
\label{app:proof prop}
Let $e_t^k:=\hat\theta_t^k-\theta^k$ be the parameter estimation error of agent $k$ using the N-PACE method, and $\tilde e_t^k:=\tilde\theta_t^k-\theta^k$ be the parameter estimation error under the assumption of having an expert peer, for $k\in\{i,j\}$. The realized peer actions obey $a_t^k=M_t^k(s_t,\hat\theta_t^{-k})\,\theta^k+S_t^k(s_t,\hat\theta_t^{-k})+w_t^k$ (linear in $\theta^k$) where $w_t^k$ is a bounded variance measurement noise. Then the N-PACE and Expert peer assumption method error recursions are:
\[
\begin{aligned}
\text{(N-PACE)}\quad
e_{t+1}^i&=\big(I-\alpha\,M_t^{i\top}M_t^i\big)e_t^i+\alpha\,M_t^{i\top}w_t^i,\\
e_{t+1}^j&=\big(I-\alpha\,M_t^{j\top}M_t^j\big)e_t^j+\alpha\,M_t^{j\top}w_t^j,\\[-1mm]
\text{(Expert assu)} \quad
\tilde e_{t+1}^i&=\big(I-\alpha\,\tilde M_t^{i\top}\tilde M_t^i\big)\tilde e_t^i-\alpha\,\tilde M_t^{i\top}C_t^i\,\tilde e_t^j+\alpha\,\tilde M_t^{i\top}w_t^i,\\
\tilde e_{t+1}^j&=\big(I-\alpha\,\tilde M_t^{j\top}\tilde M_t^j\big)\tilde e_t^j-\alpha\,\tilde M_t^{j\top}C_t^j\,\tilde e_t^i+\alpha\,\tilde M_t^{j\top}w_t^j,
\end{aligned}
\]
where $\tilde M_t^k:=M_t^k(s_t,\theta^{-k})$, and the off-diagonal terms $C_t^i,C_t^j$ \emph{include both $M$- and $S$-sensitivities} via the mean-value form
\[
\begin{aligned}
C_t^i&=\int_0^1\! \frac{\partial}{\xi_t^i}\Big( M_t^i\big(s_t,\xi_t^i(\tau)\big)\,\theta^i+ S_t^i\big(s_t,\xi_t^i(\tau)\big)\Big)\,d\tau,\\
&\xi_t^i(\tau):=\hat\theta_t^j+\tau(\theta^j-\hat\theta_t^j)
\end{aligned}
\]
and analogously for $C_t^j$; thus $\big(M_t^i(s_t,\theta^j)-M_t^i(s_t,\hat\theta_t^j)\big)\theta^i+\big(S_t^i(s_t,\theta^j)-S_t^i(s_t,\hat\theta_t^j)\big)=C_t^i\,(\theta^j-\hat\theta_t^j)=-C_t^i\,\tilde e_t^j$ (and symmetrically for $j$). If $\|M_t^k\|\le L$, $\mathbb E\|w_t^k\|^2\le\sigma_k^2$, and there is persistent excitation $\lambda_{\min}\big(\mathbb E[M_t^{k\top}M_t^k]\big)\ge\mu_k>0$, then from N-PACE errors dynamics we have $\mathbb E[\|e_{t+1}^k\|^2\mid\mathcal F_t]\le(1-2\alpha\mu_k+\alpha^2L^4)\|e_t^k\|^2+\alpha^2L^2\sigma_k^2$, so any $0<\alpha<2\mu/L^4$ gives contraction in mean square (a.s.\ convergence with diminishing stepsizes; bounded MSE floor with constant stepsizes). In contrast, expert assymption error dynamics contains the off-diagonal couplings $-\alpha\,\tilde M_t^{i\top}C_t^i$ and $-\alpha\,\tilde M_t^{j\top}C_t^j$ that arise from \emph{both} the $M$- and $S$-term mismatches; unless these are uniformly dominated by the diagonal self-excitation (e.g., the symmetric part of $\tilde M_t^{k\top}\tilde M_t^k$ exceeds $\tilde M_t^{k\top}C_t^k$ for $k\in\{i,j\}$), the one-step map need not be a contraction for any $\alpha>0$, so the intent inference while assuming the peer agent is an expert is not guaranteed to converge and may even result in unstable parameter estimation/inference.
\end{proof}


\begin{proof}[Proof of Lemma~\ref{lem:monotone-teaching}]\label{app:proof:monotone-teaching}
Fix $t$, agent $k$, and data $(s_t,a_t^{-k},\hat{\theta}^k_t,\hat{\theta}^{-k}_t)$. Let the admissible set $\mathcal{A}_t^k$ be nonempty, and suppose minimizers $a_t^{k,\eta}\in\arg\min_{a^k\in\mathcal{A}_t^k}\Phi_t^{k,\eta}(a^k)$ exist for the weights $\eta$ under consideration (e.g., compact $\mathcal{A}_t^k$ and continuity of $C_t^k,E_t^k$ suffice). Recall the definitions:
\[
\begin{aligned}
C_t^k(a^k) &:= J^k(s_t, a^k, a_t^{-k}), \quad
\hat{\theta}^k_{t+1}(a^k):= h^{-k}_\ell(\hat{\theta}^k_t,\hat{\theta}^{-k}_t, a^k, s_t),\\
E_t^k(a^k) &:= \big\|\hat{\theta}^k_{t+1}(a^k) - \theta^k\big\|^2, \quad
\Phi_t^{k,\eta}(a^k):= C_t^k(a^k) + \eta\,E_t^k(a^k).
\end{aligned}
\]

Take any $0\le \eta_1<\eta_2$. By optimality of $a_t^{k,\eta_2}$ for $\Phi_t^{k,\eta_2}$ and $a_t^{k,\eta_1}$ for $\Phi_t^{k,\eta_1}$,
\begin{align}
C_t^k(a_t^{k,\eta_2})+\eta_2\,E_t^k(a_t^{k,\eta_2})
&\le C_t^k(a_t^{k,\eta_1})+\eta_2\,E_t^k(a_t^{k,\eta_1}), \label{eq:opt1}\\
C_t^k(a_t^{k,\eta_1})+\eta_1\,E_t^k(a_t^{k,\eta_1})
&\le C_t^k(a_t^{k,\eta_2})+\eta_1\,E_t^k(a_t^{k,\eta_2}). \label{eq:opt2}
\end{align}
Add \eqref{eq:opt1} and \eqref{eq:opt2} and cancel the $C_t^k$ terms on opposite sides to obtain
\[
(\eta_2-\eta_1)\,\Big(E_t^k(a_t^{k,\eta_2})-E_t^k(a_t^{k,\eta_1})\Big)\ \le\ 0.
\]
Since $\eta_2-\eta_1>0$, it follows that $E_t^k(a_t^{k,\eta_2})\le E_t^k(a_t^{k,\eta_1})$. As $\eta_1,\eta_2$ were arbitrary with $\eta_1<\eta_2$, the map $\eta\mapsto E_t^k(a_t^{k,\eta})$ is non-increasing.
\end{proof}

\section{Details of Bayesian Gradient Descent Learning Function}
\label{app:bayesian_gd}

In our case study implementations, we considered a Bayesian interpretation of the intent learning rule used in Eq.~(\ref{eq:learning_dy}). 
Agent \(k\) maintains a Gaussian belief over the opponent’s intent parameter,
\(b_t(\theta^{-k}) = \mathcal{N}(\mu_t^{-k},\,\Sigma_t^{-k})\).
At each time step, after observing the opponent’s control \(a_t^{-k}\) and the current state \(s_t\), the belief is updated using a Gaussian observation model:
\[
p(a_t^{-k}\mid s_t,\theta^{-k}) = 
\mathcal{N}\!\Big(a_t^{-k}\,;\,\pi_t^{-k}(s_t;\theta^{-k},\hat{\theta}^k_t),\,R_t\Big),
\]
where \(R_t\succ 0\) is the observation noise covariance, typically \(R_t=\sigma_a^2 I_m\).
This assumption implies that the opponent’s action is a noisy realization of their nominal policy 
\(\pi_t^{-k}(s_t;\theta^{-k},\hat{\theta}^k_t)\).
Applying Bayes’ rule yields
\[
b_{t+1}(\theta^{-k}) \propto p(a_t^{-k}\mid s_t,\theta^{-k})\,b_t(\theta^{-k}).
\]

Under the Gaussian–Gaussian model, the posterior remains Gaussian with analytically tractable updates for the mean and covariance.
Let \(J_t^{-k} = \nabla_{\theta^{-k}} \pi_t^{-k}(s_t;\mu_t^{-k},\hat{\theta}^k_t)\in\mathbb{R}^{m\times p}\) be the policy Jacobian with respect to the opponent’s intent parameters. 
Then, the belief updates are given by
\begingroup\small
\begin{align}
\mu_{t+1}^{-k} &= \mu_t^{-k} + \Sigma_t^{-k} J_t^{-k\top}
                 \Big(R_t + J_t^{-k}\Sigma_t^{-k}J_t^{-k\top}\Big)^{-1}
                 \Big(a_t^{-k} - \pi_t^{-k}(s_t;\mu_t^{-k},\hat{\theta}^k_t)\Big), \label{eq:bgd_mean}\\[3pt]
\Sigma_{t+1}^{-k} &= \Sigma_t^{-k} - \Sigma_t^{-k} J_t^{-k\top}
                 \Big(R_t + J_t^{-k}\Sigma_t^{-k}J_t^{-k\top}\Big)^{-1}
                 J_t^{-k}\Sigma_t^{-k}. \label{eq:bgd_var}
\end{align}
\endgroup

Equation~\eqref{eq:bgd_mean} can be viewed as a Bayesian counterpart to the point-estimate update in Eq.~(\ref{eq:learning_dy}), 
where the gain term \(\Sigma_t^{-k}J_t^{-k\top}(R_t+J_t^{-k}\Sigma_t^{-k}J_t^{-k\top})^{-1}\) acts as an adaptive learning rate that shrinks as uncertainty decreases.
The covariance recursion~\eqref{eq:bgd_var} quantifies this information gain: as the agent collects more consistent observations, the posterior variance \(\Sigma_t^{-k}\) contracts, naturally tempering the step size in subsequent updates.
This Bayesian gradient-descent interpretation thus generalizes the deterministic gradient rule by embedding uncertainty weighting and curvature correction into the learning dynamics.

In Case Study~2, during the robustness analysis, the term \emph{mismatched initial estimates} refers to the initial values of \(\mu_t\), whereas \emph{mismatched variances} refer to the initial values of \(\Sigma_t\). These parameters effectively modulate the convergence rate of the mean estimates and can be interpreted as representing discrepancies in how the agents model each other’s learning dynamics.

\section{Details of The Non Game Theoretic MPC in Case Study 1}
\label{sec:peeraware-mpc}

At time $t$, agent $k\!\in\!\{i,j\}$ represents the other agent $-k$ by a \emph{linear goal–seeking} feedback policy (affine in state)
\begin{equation}
\pi^{\,-k}(s)\;=\;K^{\,-k}\!\big(s - s_{\text{goal}}(\hat\theta^{\, -k})\big)\;+\;b^{\,-k},
\label{eq:peer-linear}
\end{equation}
where $K^{\,-k},b^{\,-k}$ are planning gains, and $s_{\text{goal}}(\hat\theta^{\, -k}_\tau)$ encodes the peer’s (unknown) goal components (In lunar lander case, $\hat x_f$ or $\hat y_f$, and desired finla $\pi/2$ angle and zero velocities). 
The parameters $\hat\theta^{\, -k}$ and $K^{\,-k},b^{\,-k}$ are \emph{updated online from recent $(s,a^{-k})$} via an Extended Kalman Filtering, so the modeled policy better matches the observed peer actions.

Given the current estimates $(\hat\theta^{\, -k}_t,K^{\,-k}_{t},b^{\,-k}_{t})$, agent $k$ optimizes \emph{its own} receding–horizon cost:
\begin{equation}
\begin{aligned}
\min_{\{a^k_\tau\}} \ \ 
&J^k_t \;=\; \sum_{\tau=t}^{T} g^k_\tau\!\Big(s_\tau,\ a^k_\tau,\ \pi^{\,-k}(s_\tau);\ \theta^k\Big)
\; \\
\text{s.t.}\ \ 
&s_{\tau+1} \;=\; f\!\Big(s_\tau,\ a^k_\tau,\ \pi^{\,-k}(s_\tau)\Big),
\end{aligned}
\label{eq:peeraware-mpc}
\end{equation}
So at each step agent $k$ (i) updates $\hat\theta^{\, -k}$ and, $K^{\,-k},b^{\,-k}$ from the latest action observations using EKF; 
(ii) \emph{fixes} the peer model \eqref{eq:peer-linear}; 
(iii) applies the best response to the peer's modeled policy by solve \eqref{eq:peeraware-mpc} with iLQR, and using the first action $a^k_t$.

\section{Details of Min–Max Adversary Controller in Case Study~3}
\label{subsec:minmax}

The \emph{min–max adversary} baseline models the opponent’s intent parameter 
$\theta^{-k}$ as an unknown but bounded disturbance within a compact set 
$\Theta^{-k}$. Rather than inferring $\theta^{-k}$, each agent $k\!\in\!\{i,j\}$ 
adopts a robust strategy that minimizes its own cost under the \emph{worst-case} 
opponent intent.

For any pair $(\theta^i,\theta^j)$, the agents solve a finite-horizon 
complete-information differential game using iLQGames, yielding local Nash control 
sequences parametrized in $(\theta^i,\theta^j)$.
\[
(\mathbf{u}^{i*},\,\mathbf{u}^{j*})
=\mathrm{iLQGame}\!\big(s_t;\theta^i,\theta^j,H\big),
\quad
s_{\tau+1}^*=f_\tau(s_\tau^*,u_\tau^{i*},u_\tau^{j*}),
\]
Consequently, the associated value for agent $k$ is:
\[
\Phi^k(\theta^{-k};s_t)
=\sum_{\tau=t}^{T}
g^k_\tau\!\big(s_\tau^*,u_\tau^{i*},u_\tau^{j*};\theta^k\big).
\]

Agent $k$ then chooses its control based on the opponent's worst possible intent 
$\theta^{-k}_{\mathrm{wc}}$ that maximizes its own cost:
\[
\theta^{-k}_{\mathrm{wc}}
\in
\arg\max_{\theta^{-k}\in\Theta^{-k}}
\Phi^k(\theta^{-k};s_t),
\qquad
a_t^k
=
u_t^{k*}\!\big(s_t;\theta^k,\theta^{-k}_{\mathrm{wc}}\big).
\]
The opponent’s parameter is thus treated as an \emph{adversarial latent variable}, 
and the agent executes the first control from the corresponding ILQ equilibrium.

After applying $(a_t^i,a_t^j)$, the system evolves via 
$s_{t+1}=f_t(s_t,a_t^i,a_t^j)$, and the entire procedure repeats. 
The maximization over $\theta^{-k}$ is performed numerically using a 
derivative-free golden-section search on the bounded interval $\Theta^{-k}$.
This baseline performs no Bayesian update or learning, representing a 
\emph{worst-case robust controller} that trades adaptivity for guaranteed safety.
It needs to be noted that although agent $k$ performs based on the  opponent's worst possible intent $\theta^{-k}_{\mathrm{wc}}$, in the iLQGame that it is solving, the assumption of a complete information peer agent is still hidden. This explains why, even under an adversary-minimax controller that is supposed to be safe, we still observed two collisions in case study 3.

\section{Computational Cost and Scalability}
\label{sec:comp-cost}
All experiments were run in \textbf{Julia 1.11.2} on a \textbf{13th Gen Intel Core i7-1360P} (16 threads). 
Unless noted otherwise, simulations use sampling time $\Delta T = 0.1\,$s and planning horizon $T=50$. 
We report wall-clock time \emph{per control step} for the deployment-relevant agent (agent~2), split into 
\emph{control} (policy solve) and \emph{estimation} (parameter update), with medians and p95 in brackets.

Our control loop mirrors the ILQGames kernel~\cite{fridovich2020efficient}: at each iteration and horizon step we (i) linearize the dynamics and quadraticize costs, then (ii) solve the coupled LQ game via a backward pass that inverts a block of size $m=\sum_i n_{u,i}$ (total control dimension). The dominant per-iteration cost scales as
\[
\mathcal{O}\!\left(H \big(n_x^3 + m^3\big)\right),
\]
multiplied by the number of iLQ iterations; this matches the complexity discussion in~\cite{fridovich2020efficient}. Our \emph{control} times in Table~\ref{tab:comp-cost} (medians 11.6--19.6\,ms; p95 $<$ 32\,ms) are therefore directly comparable to ILQGames timing reports, while our \emph{estimation} adds a modest constant-factor overhead (another predictive solve + gradients).

At sampling $\Delta T{=}0.1$\,s (10\,Hz) of the case studies, all \emph{total} p95s are $\leq 64$\,ms, leaving comfortable slack to the 100\,ms budget. Because the dominant kernel is identical to ILQGames as in ~\cite{fridovich2020efficient}, our applicability as $n_x$ or the number of players grows is similar: increasing $n_x$ mainly affects the $n_x^3$ term, and adding players/control channels increases $m$ and thus the $m^3$ term. In practice, we found that warm-starting and capping ILQ iterations keep medians low and tails tight; for larger $n_x$ or more players, the same ILQGames heuristics (trajectory/state reuse, preallocation, looser tolerances in the \emph{predictive} solve) might preserve 10\,Hz operation with comparable margins.

\begin{table}[t]
\centering
\small
\begin{tabular}{lccc}
\toprule
Task & Control [ms] & Estimation [ms] & Total [ms] \\
\midrule
Lunar lander ($n_x{=}6,n_u{=}2$) & 11.6 [26.6]  & 12.3 [27.2]  & 24.2 [55.6] \\
Lane merging ($n_x{=}6,n_u{=}3$) & 18.5 [31.1] & 21.3 [33.2]  & 39.0 [63.7] \\
Intersection ($n_x{=}4,n_u{=}2$) & 14.8 [25.6] & 17.1 [30.6]  & 33.2 [54.7] \\
\bottomrule
\end{tabular}
\caption{Per-step wall-clock runtime for agent~2 (medians; p95 in brackets). 
All runs: Julia~1.11.2 on Intel i7-1360P (16 threads), $\Delta T{=}0.1$\,s, $T{=}50$.}
\label{tab:comp-cost}
\end{table}

\section{Extension to Multi-Agent Scenarios}
N-PACE can be extended to more agents, similar to how ILQGame\cite{fridovich2020efficient} scales to multi-agent settings: each agent would model the learning of all others, increasing the number of predictive ILQ solves and terms in equation (\ref{eq: PACE})  as:

\begin{equation}
\label{eq:PACE-m}
\begin{aligned}
\hat{\theta}^{1}_{t+1} &= h^{1}_{l}\!\Big(\hat{\theta}^{1}_{t},\,\hat{\theta}^{2}_{t},\,\ldots,\,\hat{\theta}^{m}_{t},\; a^{1}_{t},\,a^{2}_{t},\,\ldots,\,a^{m}_{t},\; s_t\Big),\\
\hat{\theta}^{2}_{t+1} &= h^{2}_{l}\!\Big(\hat{\theta}^{1}_{t},\,\hat{\theta}^{2}_{t},\,\ldots,\,\hat{\theta}^{m}_{t},\; a^{1}_{t},\,a^{2}_{t},\,\ldots,\,a^{m}_{t},\; s_t\Big),\\
&\ \vdots\\
\hat{\theta}^{m}_{t+1} &= h^{m}_{l}\!\Big(\hat{\theta}^{1}_{t},\,\hat{\theta}^{2}_{t},\,\ldots,\,\hat{\theta}^{m}_{t},\; a^{1}_{t},\,a^{2}_{t},\,\ldots,\,a^{m}_{t},\; s_t\Big).
\end{aligned}
\end{equation}

The main limitation is computational cost; as we showed, N-PACE scales as O(m3), comparable to ILQGame\cite{fridovich2020efficient} , making extensions feasible to at least 3–4 agents. To illustrate the multi-agent extension conceptually, we include a 3-vehicle platooning experiment similar to the experiment in \cite{li2024intent} in table \ref{tab:platooning3}.
The reported error convergence results are for the worst among all three 3 vehicles, showing how N-PACE outperforms baselines even in a 3-player interaction.

\begin{table}[h]
\centering
\small
\setlength{\tabcolsep}{2pt}
\renewcommand{\arraystretch}{0.9}
\begin{tabular*}{\columnwidth}{@{\extracolsep{\fill}} lccc @{}}
\toprule
Method & Failure [\%] & $t_{10\%}$ [s] & Final err. [\%] \\
\midrule
Expert baseline & 10.33 & --  & 14 \\
N-PACE & 6.33 & 2.1 & 8 \\
N-PACE (intent comm.) & 4.00 & 1.5 & 3 \\
\bottomrule
\end{tabular*}
\caption{3-vehicle platooning results. Reported error-convergence metrics are for the worst-performing vehicle among the three.}
\label{tab:platooning3}
\end{table}





\end{document}